\theoremstyle{plain}
\newtheorem{lem}{Lemma}
\newtheorem{thm}[lem]{Theorem}
\DeclareMathOperator{\Iso}{\mathbf{Iso}}
\DeclareMathOperator{\Isoc}{\mathbf{Isoc}}
\newcommand{\ilim}{\mathop{\varprojlim}\limits}
\begin{document}
\begin{center}
\Large
{\bf An algebraically independent generating set of the algebra of local unitary invariants}
\end{center}
\vspace*{-.1cm}
\begin{center}
P\'eter Vrana
\end{center}
\vspace*{-.4cm} \normalsize
\begin{center}
Department of Theoretical Physics, Institute of Physics, Budapest University of\\ Technology and Economics, H-1111 Budapest, Hungary

\vspace*{.2cm}
(\today)
\end{center}

\begin{abstract}
We show that the inverse limit of the graded algebras of local unitary invariant
polynomials of finite dimensional $k$-partite quantum systems is free, and give
an algebraically independent generating set. The number of degree $2d$ invariants
in the generating set is equal to the number of conjugacy classes of index $d$
subgroups of a free group on $k-1$ generators.
\end{abstract}

\section{Introduction}

One of the approaches to the problem of understanding quantum entanglement is
to look for functions on the state space which are invariant under the action
of the local unitary (LU) group and enable us to distinguish between different
types of states. For a multipartite quantum system with distinguishable
subsystems, this group is the product of the unitary groups acting on the
Hilbert spaces of the individual subsystems.

The problem of separating the orbits can be reduced to finding the set of polynomial
invariants \cite{MW}, which form an algebra. Unfortunately, a description in terms
of generators and relations is available only in the case of some special Hilbert
space dimensions and particle numbers, in other cases, only partial results exist,
see e.g. \cite{Verstraete, Brylinski, LT, LTT}.

In \cite{HW} it was pointed out that the dimension of LU-invariant homogenous
polynomials with a fixed degree stabilizes as the dimensions of the Hilbert
spaces of the subsystems increase. Based on this observation, one can introduce
an algebra which can be thought of as gluing together the algebras of LU-invariant
polynomials of various finite dimensional quantum systems \cite{Vrana}, much like
one studies the algebra of symmetric polynomials independently of the number of
variables.

The outline of the paper is as follows. In section~\ref{sec:LUinv} we summarize
the construction of the inverse limit of the algebras of LU-invariant polynomials
over finite dimensional state spaces of quantum systems with a fixed number of
subsystems. We call this object the algebra of local unitary invariants \cite{Vrana}.

In section~\ref{sec:cover} we collect some facts about finite coverings of a
graph. In particular, we describe a bijection between conjugacy classes of
finite index subgroups of a free group, finite coverings of a certain graph
and orbits of tuples of permutations under simultaneous conjugation following
ref. \cite{Kwak}.

In section~\ref{sec:alggen} we prove that the algebra of local unitary invariants
is free by giving an algebraically independent generating set. Our proof makes
use of the invariants introduced in ref. \cite{HWW}.

Section~\ref{sec:conclusion} contains some concluding remarks, including the
interpretation of our result in the context of LU-invariants of mixed states.

\section{The algebra of local unitary invariants}\label{sec:LUinv}

Let $k\in\mathbb{N}$ and for every $k$-tuple $n=(n_1,\ldots,n_k)\in\mathbb{N}^k$
let us consider the complex Hilbert space $\mathcal{H}_n=\mathbb{C}^{n_1}\otimes\cdots\otimes\mathbb{C}^{n_k}$
describing the pure states of a composite system with $k$ distinguishable
subsystems. The group of local unitary transformations, $LU_n=U(n_1,\mathbb{C})\times\cdots\times U(n_k,\mathbb{C})$,
acts on $\mathcal{H}$ in the obvious way, i.e. regarding $\mathbb{C}^{n_i}$ as
the standard representation of $U(n_i,\mathbb{C})$.

Let $I_{k,n}$ denote the algebra of $LU_n$-invariant polynomial functions over
$\mathcal{H}_n$, regarded as a real vector space. Polynomial functions (with
respect to any fixed basis) are in bijection with elements in
$S(\mathcal{H}_n\oplus\mathcal{H}_n^{*})$, the symmetric algebra on
$\mathcal{H}_n\oplus\mathcal{H}_n^{*}$ on which an action of $LU_n$ is induced
and we have
\begin{equation}
I_{k,n}=S(\mathcal{H}_n\oplus\mathcal{H}_n^{*})^{LU_n}
\end{equation}
Note that in $I_{k,n}$ the polynomials are of the same degree in the coefficients
and their complex conjugates, therefore we find it convenient to use a grading
which is different from the usual one in a factor of two, and call homogenous
degree $m$ the polynomials which are of degree $m$ both in the coefficients and
their conjugates.

For $n\le n'\in\mathbb{N}^k$ with respect to the componentwise order, we have
the inclusion $\iota_{n,n'}:\mathcal{H}_n\hookrightarrow\mathcal{H}_{n'}$ which
is the tensor product of the usual inclusions $\mathbb{C}^{n_i}\hookrightarrow\mathbb{C}^{n'_i}$
sending an $n_i$-tuple to the first $n_i$ components. Similarly, we regard
$LU_n$ as a subgroup of $LU_{n'}$ which stabilizes the image of $\iota_{n,n'}$,
and thus $\iota_{n,n'}$ is an $LU_n$-equivariant linear map, inducing
a morphism of graded algebras $\varrho_{n,n'}:I_{k,n'}\to I_{k,n}$.

$((I_{k,n})_{n\in\mathbb{N}^k},(\varrho_{n,n'})_{n\le n'\in\mathbb{N}^k})$
is an inverse system of graded algebras, the inverse limit of which will
be denoted by $I_k$ and called the algebra of LU-invariants:
\begin{equation}
I_k:=\ilim_{n\in\mathbb{N}^k}I_{k,n}=\left\{(f_n)_{n\in\mathbb{N}^k}\in\prod_{n\in\mathbb{N}^k}I_{k,n}\Bigg|\forall n\le n':f_n=\varrho_{n,n'}f_{n'}\right\}
\end{equation}
Note that $I_{k,(n_1,\ldots,n_k)}$ is a quotient of $I_k$ and the restriction
of the quotient map to the subspace of elements of degree at most $\min\{n_1,\ldots,n_k\}$
is an isomorphism.

The dimension of the homogenous degree $m$ subspace of $I_k$ is given by
\begin{equation}\label{eq:stabdim}
d_{k,m}=\sum_{a\Vdash m}\left(\prod_{i=1}^{m}i^{a_i}a_i!\right)^{k-2}
\end{equation}
and the Hilbert series of $I_k$ is \cite{Vrana}
\begin{equation}
\sum_{m\ge 0}d_{k,m}t^m = \prod_{d\ge 1}(1-t^d)^{-u_d(F_{k-1})}
\end{equation}
where $u_d(F_{k-1})$ denotes the number of conjugacy classes of index $d$
subgroups of $F_{k-1}$, the free group on $k-1$ generators.

Our aim is to prove that $I_k$ is free, and the number of degree $d$ invariants
in an algebraically independent generating set equals the number of conjugacy
classes of index $d$ subgroups in the free group on $k-1$ generators, as
the Hibert series suggests.

\section{Graph coverings}\label{sec:cover}

Let $G=(V,E)$ be a connected graph with coloured and directed edges (possibly multiple edges
and/or loops). A graph $\tilde{G}=(\tilde{V},\tilde{E})$ together with a projection
$p:\tilde{G}\to G$ is said to be a covering of $G$ if $p_V:\tilde{V}\to V$ and
$p_E:\tilde{E}\to E$ are two surjections where the image of the head (tail) of an edge
is the head (tail) of its image, $p_E$ respects colours and such that the indegree
and outdegree of every vertex $\tilde{v}\in\tilde{V}$ is the same as that of
$p_V(\tilde{v})$ in each subgraph determined by the colours. A covering $p:\tilde{G}\to G$
is said to be finite if $|p_V^{-1}(v)|<\infty$ and $m$-fold if $|p_V^{-1}(v)|=m$
for all $v\in G$.

Two coverings $p_1:\tilde{G}_1\to G$ and $p_2:\tilde{G}_2\to G$ are said to
be isomorphic if there exists an isomorphism $\varphi:\tilde{G}_1\to\tilde{G}_2$
making the following diagram commute:
\begin{equation}
\xymatrix{\tilde{G}_1 \ar@{->>}[dr]_{p_1}\ar[rr]^{\varphi}  &  & \tilde{G}_2 \ar@{->>}[dl]^{p_2}  \\
          & G}
\end{equation}
The set of isomorphism classes of finite coverings of a graph $G$ will be denoted by
$\Iso(G)$, the set of isomorphism classes of $m$-fold coverings by $\Iso(G,m)$,
and the connected ones by $\Isoc(G)$ and $\Isoc(G,m)$, respectively.

Let $G$ be the graph with a single vertex and $k-1$ directed coloured loops.
It is well-known that its fundamental group $\pi_1(G)$ is the free group of rank
$k-1$, and a set of generators may be identified with the directed loops. There
exists a bijection between $\Isoc(G,m)$ and conjugacy classes of subgroups of
index $m$ of $\pi_1(G)$.

Let $S_m$ denote the group of bijections from $\{1,\ldots,m\}$ to itself. This
group acts on $S_m^{k-1}=S_m\times S_m\times\cdots\times S_m$ by simultaneous
conjugation:
\begin{equation}
\pi\cdot(\sigma_1,\ldots,\sigma_{k-1})=(\pi\sigma_1\pi^{-1},\ldots,\pi\sigma_{k-1}\pi^{-1})
\end{equation}
Let us denote the orbit of $(\sigma_1,\ldots,\sigma_{k-1})\in S_m^{k-1}$ by
\begin{equation}
[\sigma_1,\ldots,\sigma_{k-1}]=\{\pi\cdot(\sigma_1,\ldots,\sigma_{k-1})|\pi\in S_m\}
\end{equation}

To a (not necessarily connected) $m$-fold covering $\tilde{G}$ of $G$ we can associate
an element in
\begin{equation}
S_m^{k-1}/S_m=\{[\sigma_1,\ldots,\sigma_{k-1}]|\forall i:\sigma_i\in S_m\}
\end{equation}
as follows. Label the vertices of $\tilde{G}$ arbitrarily with the numbers
$\{1,\ldots,m\}$ using each label exactly once. Let $\sigma_i$ be the permutation
which sends $a$ to $b$ if there is a directed edge from $a$ to $b$ of colour $i$
in $\tilde{G}$. Note that this gives indeed a $k-1$-tuple of permutations as the
indegree and outdegree of every vertex in $\tilde{G}$ is $1$ in the subgraph
determined by any colour. As relabelling corresponds to simultaneous conjugation,
we have indeed a well-defined map $\Phi:\Iso(G,m)\to S_m^{k-1}/S_m$.

Now let $\tilde{G}_1$ and $\tilde{G}_2$ be two coverings of $G$ where $\tilde{G}_1$
is $m_1$-fold and $\tilde{G}_2$ is $m_2$-fold. The disjoint union $\tilde{G}:=\tilde{G}_1\sqcup\tilde{G}_2$
is then an $m_1+m_2$-fold covering of $G$. We would like to relate the orbits
of $k-1$-tuples of permutations of the three coverings. Let us choose the numbering
of the vertices of $\tilde{G}$ so that $\tilde{G}_1$ is labelled with $\{1,\ldots,m_1\}$
and $\tilde{G}_2$ is labelled with $\{m_1+1,\ldots,m_1+m_2\}$.

Let $(\sigma^{(j)}_1,\ldots,\sigma^{(j)}_{r-1})$ be the representative of the
orbit corresponding to $\tilde{G}_j$ and $(\sigma_1,\ldots,\sigma_{r-1})$ be
that of $\tilde{G}$ which can be read off from the above-chosen labelling
(after subtracting $m_1$ in the $j=2$ case). It is easy to see that for all
$1\le i\le k-1$ we have
\begin{equation}\label{eq:star}
\sigma_i(a)=\left\{\begin{array}{ll}
\sigma^{(1)}_i(a) & \textrm{if $a\le m-1$}  \\
\sigma^{(2)}_i(a-m_1)+m_1 & \textrm{if $a> m-1$}  \\
\end{array}\right.
\end{equation}
given by the usual homomorphism $S_{m_1}\times S_{m_2}\hookrightarrow S_{m_1+m_2}$.
This map clearly induces a map $\star:S_{m_1}^{k-1}/S_{m_1}\times S_{m_2}^{k-1}/S_{m_2}\to S_{m_1+m_2}^{k-1}/S_{m_1+m_2}$
on the orbits (we will use infix notation, i.e. the map sends $(a,b)\mapsto a\star b$).
One can see immediately that $\star$ turns the set $\bigsqcup_{m=1}^{\infty}S_m^{k-1}/S_m$
into a commutative semigroup. Also, $\Iso(G)$ can be equipped with a semigroup
structure induced by disjoint union, and $\Phi$ is an isomorphism.

\section{Algebraically independent generators of the algebra of LU-invariants}\label{sec:alggen}

To an orbit in $S_m^{k-1}/S_m$ we can associate an element in $I_k$ as follows.
It was shown in ref. \cite{Vrana} that every element of $I_k$ is represented in
some $I_{k,n}$. We will give a representative in $I_{k,n}$ where $n=(n_1,\ldots,n_k)\ge(m,\ldots,m)$ following
ref. \cite{HWW}. A vector in $\mathcal{H}_{n}$ is of the form
\begin{equation}
\psi=\sum_{i_1,\ldots,i_k}\psi_{i_1,\ldots,i_k}e_{i_1}\otimes\cdots\otimes e_{i_k}
\end{equation}
where in the sum $1\le i_j \le n_j$ for all $1\le j\le k$.

Let $(\sigma_1,\ldots,\sigma_{k-1})\in S_m^{k-1}$ be a representative. The value
of the associated polynomial on $\psi$ is
\begin{equation}\label{eq:f}
f_{[\sigma_1,\ldots,\sigma_{k-1}]}(\psi)=\sum_{i^{1}_{1},\ldots,i^{m}_{k}}\psi_{i^{1}_1,\ldots,i^{1}_k}\cdots\psi_{i^{m}_1,\ldots,i^{m}_k}\overline{\psi_{i^{\sigma_1(1)}_1,\ldots,i^{\sigma_{k-1}(1)}_{k-1},i^{1}_k}}\cdots\overline{\psi_{i^{\sigma_1(m)}_1,\ldots,i^{\sigma_{k-1}(m)}_{k-1},i^{m}_k}}
\end{equation}
where the sum is over all $k\cdot m$-tuples of integers where $1\le i^l_j \le n_j$
for all $1\le j\le k$ and $1\le l\le m$. Note that the expression defining
$f_{[\sigma_1,\ldots,\sigma_{k-1}]}$ is independent of the choice of the
representative, justifying the notation.

An important observation is the following:
\begin{lem}
Let $[\sigma^{(1)}_1,\ldots,\sigma^{(1)}_{k-1}]\in S_{m_1}^{k-1}/S_{m_1}$ and
$[\sigma^{(2)}_1,\ldots,\sigma^{(2)}_{k-1}]\in S_{m_2}^{k-1}/S_{m_2}$. Then
\begin{equation}
f_{[\sigma^{(1)}_1,\ldots,\sigma^{(1)}_{k-1}]\star[\sigma^{(2)}_1,\ldots,\sigma^{(2)}_{k-1}]}=
f_{[\sigma^{(1)}_1,\ldots,\sigma^{(1)}_{k-1}]}f_{[\sigma^{(2)}_1,\ldots,\sigma^{(2)}_{k-1}]}
\end{equation}
\end{lem}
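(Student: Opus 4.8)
The plan is to exploit the freedom in the choice of representative, noted immediately after~(\ref{eq:f}), and to evaluate the left-hand side on the most convenient representative. For the orbit $[\sigma^{(1)}_1,\ldots,\sigma^{(1)}_{k-1}]\star[\sigma^{(2)}_1,\ldots,\sigma^{(2)}_{k-1}]$ I would take the block-diagonal representative $(\sigma_1,\ldots,\sigma_{k-1})\in S_m^{k-1}$ with $m=m_1+m_2$ defined by~(\ref{eq:star}), which by the very definition of $\star$ represents the product orbit. With this choice each $\sigma_i$ preserves the two blocks $\{1,\ldots,m_1\}$ and $\{m_1+1,\ldots,m\}$ setwise, acting on the first block as $\sigma^{(1)}_i$ and on the second as $\sigma^{(2)}_i$ shifted by $m_1$. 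This block structure is really the whole content of the lemma. Representative-independence is also what makes the right-hand side well defined, so I am free to evaluate it on the chosen representatives $\sigma^{(1)}_i$ and $\sigma^{(2)}_i$.

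Concretely, I would write out the defining sum~(\ref{eq:f}) and split both the unbarred product $\prod_{l=1}^{m}$ and the barred product $\prod_{l=1}^{m}$ at $l=m_1$. In the barred factor indexed by $l$, the first $k-1$ indices are $i^{\sigma_j(l)}_j$ while the last is the un-permuted $i^{l}_k$. Since every $\sigma_j$ preserves the blocks, for $l\le m_1$ all of the variables $i^{\sigma_1(l)}_1,\ldots,i^{\sigma_{k-1}(l)}_{k-1},i^{l}_k$ carry upper index $\le m_1$, while for $l>m_1$ they all carry upper index $>m_1$. Hence the summand factors as a product of a term depending only on the variables $\{i^{l}_j : l\le m_1\}$ and a term depending only on $\{i^{l}_j : l>m_1\}$, and these two index sets are disjoint.

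Because the sum runs over all index tuples independently, the distributive law (Fubini for finite sums) factors the total sum into a product of two sums over the disjoint blocks of variables. The first is manifestly $f_{[\sigma^{(1)}_1,\ldots,\sigma^{(1)}_{k-1}]}(\psi)$, since on the first block $\sigma_j=\sigma^{(1)}_j$. For the second I would substitute $l=l'+m_1$ with $1\le l'\le m_2$ and rename $i^{l'+m_1}_j\mapsto\tilde{\imath}^{\,l'}_j$; the shift $\sigma_j(l)=\sigma^{(2)}_j(l')+m_1$ in the upper index is exactly absorbed by this renaming, turning the second sum into $f_{[\sigma^{(2)}_1,\ldots,\sigma^{(2)}_{k-1}]}(\psi)$. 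All three polynomials live in a common $I_{k,n}$ with $n\ge(m,\ldots,m)$, so this is a genuine identity there, and hence in $I_k$.

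The computation is bookkeeping rather than anything deep, so the only real care lies in the index arithmetic. The point to watch is the last, un-permuted index $i^{l}_k$ of each barred coefficient: one must confirm that it too respects the block decomposition, which it does because its upper index $l$ already sits in the block determined by $l$. The second point is to check that the renaming on the second block is an honest bijection of summation variables, so that the relabelled sum is literally $f_{[\sigma^{(2)}_1,\ldots,\sigma^{(2)}_{k-1}]}$ and not merely equal to it up to a reshuffling of coefficients; this is immediate since $\sigma^{(2)}_j\in S_{m_2}$ permutes $\{1,\ldots,m_2\}$ and $l\mapsto l-m_1$ inverts the block shift.
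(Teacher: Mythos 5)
Your proposal is correct and follows essentially the same route as the paper: choose the block-diagonal representative of the product orbit given by eq.~(\ref{eq:star}), split the summation variables into the two blocks $\{i^l_j : l\le m_1\}$ and $\{i^l_j : l> m_1\}$, and factor the sum, identifying the second factor with $f_{[\sigma^{(2)}_1,\ldots,\sigma^{(2)}_{k-1}]}$ after the index shift. The paper's proof is exactly this computation, written out as a chain of equalities.
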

\begin{proof}
Let the representative of the orbit
$[\sigma^{(1)}_1,\ldots,\sigma^{(1)}_{k-1}]\star[\sigma^{(2)}_1,\ldots,\sigma^{(2)}_{k-1}]$
given by eq. (\ref{eq:star}) be $(\sigma_1,\ldots,\sigma_{k-1})\in S_{m_1+m_2}^{k-1}$ and let us denote
$m_1+m_2$ by $m$. Then
\begin{equation}
\begin{split}
& \sum_{i^{1}_{1},\ldots,i^{m}_{k}}\psi_{i^{1}_1,\ldots,i^{1}_k}\cdots\psi_{i^{m}_1,\ldots,i^{m}_k}\overline{\psi_{i^{\sigma_1(1)}_1,\ldots,i^{\sigma_{k-1}(1)}_{k-1},i^{1}_k}}\cdots\overline{\psi_{i^{\sigma_1(m)}_1,\ldots,i^{\sigma_{k-1}(m)}_{k-1},i^{m}_k}}  \\
= & \sum_{i^{1}_{1},\ldots,i^{m_1}_{k}}\sum_{i^{m_1+1}_{1},\ldots,i^{m}_{k}}\psi_{i^{1}_1,\ldots,i^{1}_k}\cdots\psi_{i^{m}_1,\ldots,i^{m}_k}\overline{\psi_{i^{\sigma_1(1)}_1,\ldots,i^{\sigma_{k-1}(1)}_{k-1},i^{1}_k}}\cdots\overline{\psi_{i^{\sigma_1(m)}_1,\ldots,i^{\sigma_{k-1}(m)}_{k-1},i^{m}_k}}  \\
= & \sum_{i^{1}_{1},\ldots,i^{m_1}_{k}}\psi_{i^{1}_1,\ldots,i^{1}_k}\cdots\psi_{i^{m_1}_1,\ldots,i^{m_1}_k}\overline{\psi_{i^{\sigma^{(1)}_1(1)}_1,\ldots,i^{\sigma^{(1)}_{k-1}(1)}_{k-1},i^{1}_k}}\cdots\overline{\psi_{i^{\sigma^{(1)}_1(m_1)}_1,\ldots,i^{\sigma^{(1)}_{k-1}(m_1)}_{k-1},i^{m_1}_k}}  \\
\cdot & \sum_{i^{1}_{1},\ldots,i^{m_2}_{k}}\psi_{i^{1}_1,\ldots,i^{1}_k}\cdots\psi_{i^{m_2}_1,\ldots,i^{m_2}_k}\overline{\psi_{i^{\sigma^{(2)}_1(1)}_1,\ldots,i^{\sigma^{(2)}_{k-1}(1)}_{k-1},i^{1}_k}}\cdots\overline{\psi_{i^{\sigma^{(2)}_1(m_2)}_1,\ldots,i^{\sigma^{(2)}_{k-1}(m_2)}_{k-1},i^{m_2}_k}}  \\
\end{split}
\end{equation}
\end{proof}
In other words, the map $\bigsqcup_{m=1}^{\infty}S_m^{k-1}/S_m\to I_k$ given
by $s\mapsto f_s$ is a semigroup-homomorphism.

Now we are ready to prove our main theorem:
\begin{thm}
$I_k$ is freely generated by the set
\begin{equation}
F:=\{f_{\Phi(\tilde{G})}|\tilde{G}\in\Isoc(G)\}
\end{equation}
\end{thm}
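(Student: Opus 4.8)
The plan is to produce a graded algebra homomorphism from the free commutative algebra on $F$ onto $I_k$ and then to promote it to an isomorphism by comparing Hilbert series. Begin with the combinatorial input already assembled in Section~\ref{sec:cover}. Since a finite covering of $G$ decomposes uniquely into its connected components, $\Iso(G)$ is the free commutative monoid on $\Isoc(G)$, and under the semigroup isomorphism $\Phi$ this structure transports to $\bigsqcup_{m\ge 1}S_m^{k-1}/S_m$ with the operation $\star$. Consequently every orbit $s\in S_m^{k-1}/S_m$ is a $\star$-product of orbits $\Phi(\tilde G_i)$ arising from the connected components $\tilde G_i$ of $\Phi^{-1}(s)$, and by the Lemma (the statement that $s\mapsto f_s$ is a semigroup homomorphism) this gives $f_s=\prod_i f_{\Phi(\tilde G_i)}$, a product of elements of $F$. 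In particular the subalgebra of $I_k$ generated by $F$ already contains every $f_s$.

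For generation I would invoke the description of the invariants following \cite{HWW}: in the stable range the contraction invariants $\{f_s\mid s\in S_m^{k-1}/S_m\}$ linearly span the homogeneous degree $m$ part of $I_k$. Concretely, for $n$ with $\min_i n_i\ge m$ the degree $m$ part of $I_k$ is identified with that of $I_{k,n}$ (as noted in Section~\ref{sec:LUinv}), and the First Fundamental Theorem applied to each factor of $LU_n$ shows that every invariant of bidegree $(m,m)$ is a linear combination of complete contractions of $m$ copies of $\psi$ against $m$ copies of $\overline{\psi}$. Using the freedom to relabel the $\psi$-factors to normalise the pairing in the $k$-th subsystem to the identity, such a contraction is exactly an $f_s$ as in eq.~(\ref{eq:f}). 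Together with the previous paragraph this shows that $F$ generates $I_k$, i.e.\ the natural graded map $\mathbb{R}[F]\to I_k$ is surjective.

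I would then close with a dimension count. Grading the connected coverings by their number of sheets and using the bijection between $\Isoc(G,d)$ and the conjugacy classes of index $d$ subgroups of $\pi_1(G)=F_{k-1}$, the set $F$ has exactly $u_d(F_{k-1})$ generators in degree $d$. Hence $\mathbb{R}[F]$ has Hilbert series $\prod_{d\ge 1}(1-t^d)^{-u_d(F_{k-1})}$, which is precisely the Hilbert series of $I_k$ recorded in Section~\ref{sec:LUinv}. A graded surjection between graded vector spaces whose homogeneous components have equal finite dimension is an isomorphism, so $\mathbb{R}[F]\to I_k$ is an isomorphism; equivalently $F$ is algebraically independent and generating, which is the assertion that $I_k$ is freely generated by $F$.

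The genuine obstacle is the spanning claim of the second paragraph: showing that these particular contraction invariants already exhaust $I_k$ requires the invariant-theoretic input from \cite{HWW}, together with care that nothing is lost on passage to the inverse limit, namely that the identification $(I_k)_m\cong (I_{k,n})_m$ for $\min_i n_i\ge m$ is legitimate so that stable spanning transfers to $I_k$. Once that is in place, the free-monoid bookkeeping and the Hilbert series match are purely formal, and it is worth emphasising that the argument never needs a direct verification of algebraic independence—linear independence of the $f_s$ falls out of the dimension count as a consequence of surjectivity.
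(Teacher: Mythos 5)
Your argument is correct, and it reaches the conclusion by a route that differs from the paper's in one substantive respect. The paper imports from \cite{HWW} the full statement that $\{f_s \mid s\in S_m^{k-1}/S_m\}$ is a \emph{basis} (not merely a spanning set) of the degree-$m$ component of $I_{k,n}$ for $n\ge(m,\ldots,m)$; combined with the unique decomposition of a finite covering into connected components and the Lemma ($f_{s_1\star s_2}=f_{s_1}f_{s_2}$), the monomials in $F$ are then precisely the elements $f_s$ together with $1$, and freeness is immediate because these monomials form a basis. You instead use only the spanning half of that input (which you correctly re-derive from the first fundamental theorem for the unitary groups, normalising the contraction pattern in the $k$-th slot to the identity) and compensate by importing the Hilbert series $\prod_{d\ge1}(1-t^d)^{-u_d(F_{k-1})}$ of $I_k$ from \cite{Vrana}, recovering algebraic independence from the equality of graded dimensions. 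Both routes are sound, and your factorisation step and your care about the stable identification $(I_k)_m\cong(I_{k,n})_m$ match the paper exactly. The two external ingredients are close to equivalent in content: the product form of the Hilbert series amounts to saying that $\dim(I_k)_m$ equals the number of isomorphism classes of $m$-fold coverings, which together with spanning already forces the $f_s$ to be linearly independent. Your version has the merit of never needing a direct linear-independence argument for the $f_s$; the paper's has the merit of not relying on the product formula. One small point in your write-up: in asserting that $F$ has ``exactly $u_d(F_{k-1})$ generators in degree $d$'' you tacitly assume that $\tilde G\mapsto f_{\Phi(\tilde G)}$ is injective on $\Isoc(G,d)$. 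This is harmless and in fact forced by your own argument: if there were collisions, the free algebra on the actual set $F$ would have Hilbert series strictly dominated coefficientwise by $\prod_{d\ge1}(1-t^d)^{-u_d(F_{k-1})}$ in some degree, contradicting the existence of a graded surjection onto $I_k$. It would be cleaner to run the argument with a free commutative algebra on formal symbols indexed by $\Isoc(G)$ and let injectivity fall out of the resulting isomorphism, exactly as algebraic independence does.
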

\begin{proof}
In ref. \cite{HWW} it was shown that the set
\begin{equation}
\{f_s|s\in S_m^{k-1}/S_m\}=\{f_{\Phi(\tilde{G})}|\tilde{G}\in\Iso(G,m)\}
\end{equation}
forms a basis of the degree $m$ homogenous subspace of $I_{k,n}$ (when represented
as polynomials) if $n\ge(m,\ldots,m)$. Therefore, it is also a basis of the degree
$m$ homogenous subspace of $I_k$. As $I_k$ is the direct sum of its homogenous subspaces,
we conclude that $\{f_{\Phi(\tilde{G})}|\tilde{G}\in\Iso(G)\}$ is a basis of $I_k$.
Note that this also implies that the map $\tilde{G}\mapsto f_{\Phi(\tilde{G})}$ is injective.

An element of the form $f_s$ where $s\in S_m^{k-1}/S_m$ can be uniquely written
as the product of some elements of $F$. Indeed, $\Phi^{-1}(s)$ is a covering of
$G$, which can be uniquely written as a disjoint union of connected coverings
$\tilde{G}_1,\ldots,\tilde{G}_d$ (up to isomorphism and ordering), and therefore
\begin{equation}
f_s = f_{\Phi(\tilde{G}_1\sqcup\cdots\sqcup\tilde{G}_d)} = f_{\Phi(\tilde{G}_1)\star\cdots\star\Phi(\tilde{G}_{d})}
 = f_{\Phi(\tilde{G}_1)}\cdots f_{\Phi(\tilde{G}_{d})}
\end{equation}
\end{proof}

\section{Conclusion}\label{sec:conclusion}

We have shown that the inverse limit $I_k$ of the algebras of LU-invariant
polynomials of pure states of $k$-partite quantum systems with finite dimensional
Hilbert spaces is free, and an algebraically independent generating set can be
given in terms of finite connected coverings of a graph with a single vertex and
$k$ loops. The number of homogenous degree $2d$ polynomials in the algebraically
independent generating set equals the number of isomorphism classes of $d$-fold
connected coverings, which in turn equals the number of conjugacy classes of
index $d$ subgroups of a free group on $k-1$ generators.

In light of the close relationship between LU-equivalence classes of mixed
states of a $k$-particle quantum system and those of pure states of a $k+1$-particle
quantum system \cite{Albeverio,Vrana}, one should be able to interpret our result in the context of
mixed states. This can be done as follows.

Observe that each term on the right hand side of eq. (\ref{eq:f}) depends only
on the reduced density matrix obtained when we trace over the last subsystem.
Therefore it is easy to translate the result to the case of mixed state local
unitary invariants. Let $I^{mixed}_{k}$ denote the inverse limit of the algebras
of LU-invariants of mixed states over $k$-partite quantum systems with finite
dimensional Hilbert space as in \cite{Vrana}. Let $G$ be the graph with a single
vertex and $k$ directed labelled edges. To a connected covering $\tilde{G}\in\Iso(G,m)$
we associate the following invariant with $[\sigma_1,\ldots,\sigma_{k}]=\Phi(\tilde{G})$.
\begin{equation}
f_{[\sigma_1,\ldots,\sigma_{k}]}(\varrho)=\sum_{i^{1}_{1},\ldots,i^{m}_{k}}\varrho_{i^{1}_1,\ldots,i^{1}_k,i^{\sigma_1(1)}_1,\ldots,i^{\sigma_{k}(1)}_{k}}\cdots\varrho_{i^{m}_1,\ldots,i^{m}_k,i^{\sigma_1(m)}_1,\ldots,i^{\sigma_{k}(m)}_{k}}
\end{equation}
where
\begin{equation}
\varrho=\sum_{\substack{i_1,\ldots,i_k \\ j_1,\ldots,j_k}}\varrho_{i_1,\ldots,i_k,j_1,\ldots,j_k}e_{i_1}\otimes\cdots\otimes e_{i_k}\otimes e^*_{j_1}\otimes\cdots\otimes e^*_{j_k}
\end{equation}
is an arbitrary mixed state.

Explicite descriptions of the algebras $I_{k,n}$ are known in only a limited
number cases, including $k=2$, $n$ arbitrary, $k=3$, $n=(2,2,2)$ \cite{MW} and
$k=4$, $n=(2,2,2,2)$ \cite{Wallach}. It should be noted that for any $k\in\mathbb{N}$ and
$n\in\mathbb{N}^k$, $I_{k,n}$ is a quotient of $I_k$. It would be interesting
to determine the kernels of the quotient maps in each case.

We would like to emphasize that in spite of our lack of knowledge about the
structure of every single $I_{k,n}$, fortunately the generators of $I_k$ can be
directly interpreted as generators of the algebras of invariants of pure states
of arbitrary $k$-partite quantum systems.

\end{document}